\def\theTitle{How to Complete an Interactive Configuration Process?}
\def\theTitleBroken{How to Complete an\\Interactive Configuration Process?}
\def\mj{Mikol\'a\v{s} Janota}
\def\rg{Radu Grigore}
\def\jpms{Joao Marques-Silva}
\def\gb{Goetz Botterweck}
 \title{\theTitleBroken}
\author{\mj\inst{1} \and \gb\inst{2} \and \rg\inst{3} \and \jpms\inst{3}}
\institute{
Lero, University College Dublin, Ireland\and
Lero, University of Limerick, Ireland\and
University College Dublin, Ireland}
\definecolor{citeblue}{rgb}{0.4,0,.6}
\definecolor{refcolor}{rgb}{0,0.1,0.5}
\newcommand{\refgen}[2] {\hyperref[{#2}]{#1~\ref*{#2}}}
\newtheorem{observation}{Observation}
\definecolor{gray}{rgb}{.4,.4,.4}
\newcommand{\notegengen}[3]{\ {#1}\raisebox{3pt}{\textsf{\tiny #2}}\marginpar{\scriptsize\textsf{#2:} #3}}
\newcommand{\notegen}[2]{\notegengen{\eighthnote}{#1}{#2}}
\renewcommand{\notegengen}[3]{}
\newcommand{\notem}[1]{\notegen{miko}{#1}}
\newcommand{\notegb}[1]{\notegen{gb}{#1}}
\newcommand{\todo}[1]{\notegengen{$\dagger$}{\textcolor{red}{TODO}}{#1}}
\def\V{\ensuremath{{\mathcal{V}}}}
\def\D{\ensuremath{{\mathcal{D}}}}
\DeclareMathOperator{\SAT}{\textsc{Sat}}
\DeclareMathOperator{\minm}{min}
\begin{document}
\normalem
\maketitle


\begin{abstract}
  When configuring customizable software, it is useful to
  provide interactive tool-support that ensures that the configuration
  does not breach given constraints.
  But, when is a configuration complete and how can the tool help the
  user to complete it?  \notegb{I do not remember seeing a discussion
    of the ``When'' part.}
  We formalize this problem and relate it to concepts from
  non-monotonic reasoning well researched in Artificial Intelligence.
  The results are interesting for both practitioners and
  theoreticians.  Practitioners will find a technique facilitating an
  interactive configuration process and experiments
  supporting feasibility of the approach. Theoreticians will find
  links between well-known formal concepts and a concrete practical
  application.
\end{abstract}
\section{Introduction}
  \label{sec:Introduction}

  \emph{Software Product Lines}~(SPLs) build on the assumption that
  when developing software-intensive systems it is advantageous to
  decide upfront which products to include in scope and then manage
  construction and reuse systematically~\cite{ClementsNorthrop02}.

  This approach is suitable for families of products that share a
  significant amount of their user-visible or internal functionality.
  Parnas identified such \emph{program families} as:
\emph{    $\dots$ sets of programs whose common properties are so extensive
    that it is advantageous to study the common properties of the
    programs before analyzing individual members.}~\cite{Parnas76}

  A key aspect of SPLs is that the scope of products is defined and
  described \emph{explicitly} using models of various
  expressivity~\cite{KangEtAl90,Batory05,SchobbensEtAl06,JK2007Reasoning}.
  Conceptually, we can consider an SPL as a mapping from a
  \emph{problem space} to a \emph{solution space}. The problem space
  comprises requirements that members of the product line satisfy, and,
  the solution space comprises possible realizations, \eg programs in
  C$^\text{\tiny{++}}$.  These spaces are defined by some constraints,
  \ie requirements or solutions violating the constraints are not
  within the respective space.

  A specification for a new product is constructed from the
  requirements which must be matched to the constraints that define the scope
  of the product line. In effect, the purchaser picks a particular
  member of the problem space.  Subsequently, software engineers are
  responsible for delivering a product, a member of the solution
  space, corresponding to the given specification.

  If the problem space is complex, picking one of its members is not
  trivial. Hence, we strive to support interactive configuration
  with \emph{configurator} tools.
  Despite the fact that this has been researched extensively
  (see~\cite{LottazStalkerSmith98,HadzicEtAl04,Batory05,derMeerEtAl06,JKW2008Model,Janota2008Do}),
  little attention has been paid to the completion of a configuration
  process. Namely, how shall we treat variables of the configuration
  model that have not been bound by the user at all? (This issue has
  been noted by Batory in~\cite[Section~4.2]{Batory05}).

  This article studies this problem (\autoref{sec:ShoppingPrinciple}) and designs an enhancement of a
  configurator that helps the user to get closer to finishing the
  configuration process by binding variability without making
  decisions for the user, \ie it is aiming at not being overly smart.
  The article focuses mainly on this functionality for propositional
  configuration (\autoref{sec:PropositionalConfiguration}) and it
  relates to research on Closed World Assumption
  (\autoref{sec:DispensableToMinimals}).  The general,
  non-propositional, case is conceptualized relying on the notion of
  preference (\autoref{sec:CSPs}).


\section{Background and Motivation}
  \label{sec:SPL}

  Kang~et~al.\ developed a methodology \emph{Feature Oriented Domain
    Analysis} (\emph{FODA}), where \emph{feature models} are used to
  carry out \emph{domain analysis}---a systematic identification of
  variable and common parts of a family of systems~\cite{KangEtAl90}.
  For the purpose of this article, it is sufficient to consider a
  feature as ``\textit{a prominent or distinctive user-visible aspect,
    quality, or characteristic of a software system or
    system}''~\cite{KangEtAl90}, and a product as a combination of its
  features.  A \emph{Software Product Line} is a system for developing
  products from a certain family captured as a set of feature
  combinations defined by a \emph{feature model}.
\begin{figure}[t]
      \caption{FODA notation and Configuration Processes}
      \centering
      \subfloat[\scriptsize Modeling primitives]{\label{fig:fmLegend}    \begin{tikzpicture}
      [every node/.style={inner sep=2pt,node distance=.3,draw,font=\scriptsize\sf}]
      \node[very thick](root) at (0,3.5) { root feature  } ;

      \node[draw=none, font=\scriptsize\it](xorSign) at (0, 2.8) { xor-group } ; 
      \node(xor) at (0,2.5) {  parent feature } ;

     \node(A) at (-.5,2) { } ;
      \node(B) at (0.5,2) { } ;
      \draw (xor) -- (A) ;      
      \draw (xor) -- (B) ;
      \coordinate (m1) at ($ (xor) !.4! (A)  $);
      \coordinate (m2) at ($ (xor) !.4! (B)  $);
      \draw (m1) .. controls +(-45:2mm) and +(225:2mm) .. (m2) ;

      \node(opt) at (1.8,2) { optional child } ;
      \node(optD) [above=of opt] {};
      \node[text width=1.5cm,text centered](mand) at (1.8,3) { mandatory child } ;
      \node(mandD) [above=of mand] {};

      \draw[o-] (opt) --  (optD);
      \draw[*-] (mand) -- (mandD);
    \end{tikzpicture}}
      \subfloat[\scriptsize Feature model example]{\label{fig:exampleABCDfeatureModel}    \begin{tikzpicture}[xscale=1.1,
                         every node/.style={minimum height=.5cm, minimum width=.3cm,inner sep=0pt,draw,font=\scriptsize} ]

      \node[very thick](X) at (2.5,2.5) { $x$ } ;
      \node(Y) at (1.5,2) { $y$ } ;
      \node(A) at (1,1) { $a$ } ;
      \node(B) at (2,1) { $b$ } ;
      \node(C) at (3,1) { $c$ } ;
      \node(D) at (4,1) { $d$ } ;

      \tikzstyle{every node}=[inner sep=0pt]

      \draw[-*] (X) -- node[inner sep=1pt, above,sloped] {$\scriptstyle\iff$} (Y.north) ;
      \draw[-o] (X) -- node[inner sep=1pt, below,sloped] {$\scriptstyle\Leftarrow$} (C.north) ;
      \draw[-o] (X) -- node[inner sep=1pt, above,sloped] {$\scriptstyle\Leftarrow$} (D.north) ;

      \draw (Y) -- (A) ;      
      \draw (Y) -- (B) ;
      \coordinate (m1) at ($ (Y) !.4! (A)  $);
      \coordinate (m2) at ($ (Y) !.4! (B)  $);
      \draw (m1) .. controls +(-45:2mm) and +(225:2mm) .. (m2) ;
      \node[font=\tiny\it] at (1.5, 1.4) { xor };
    \end{tikzpicture}}
      \subfloat[\scriptsize Its semantics] {\label{fig:exampleABCDfeatureModelSemantics}
        $\begin{array}[b]{l}
          v_x  \land {} \\
          v_y\iff v_x \land   {} \\
          v_c\impl v_x \land v_d\impl v_x \land  {} \\
          v_a\impl v_y \land v_b\impl v_y \land  {} \\
          v_y\impl (v_a\lor v_b) \land  {} \\
          v_y\impl \lnot(v_a\land v_b)  \phantom{{}\land{}}
        \end{array}$
      }%
\\
    \subfloat[\scriptsize Example configuration of \autoref{fig:exampleABCDfeatureModel}]{
          \begin{tikzpicture}[rounded corners=3pt,xscale=1.5,yscale=.9,>=angle 60]

      \node[draw, very thick,inner sep=2pt](last) at (0.3,.4) {\scriptsize $\{x,y,a,c\}$};

       \draw (-.9,-.28) rectangle +(3.5,2.4);
       \draw (-.866,-.25) rectangle +(3.3,1.5);
       \draw (-.813,-.15) rectangle +(3.0,.95);
       \draw[->](-.85, 2.12) -- node[right]{\scriptsize $1$} (-.8, 1.25); 
       \draw[->](-.75, 1.25) -- node[right]{\scriptsize $2$} (-.7, .8); 
       \draw[->](-.65, .8) --  (-.4, .5) node[left]{\scriptsize $3$} -- (last.west); 


      [every node/.style={font=\scriptsize}]
      \node at (1.6,1.9) {\scriptsize $\{x,y,b\}$};
      \node at (0,1.9) {\scriptsize $\{x,y,b,d\}$ }; 
      \node at (1.6,1.5) {\scriptsize $\{x,y,b,c,d\}$};
      \node at (0,1.5) {\scriptsize $\{x,y,b,c\}$};
      \node at (0,1) {\scriptsize $\{x,y,a\}$};
      \node at (1.6,1) {\scriptsize $\{x,y,a,d\}$};
      \node at (1.6,.4) {\scriptsize $\{x,y,a,c,d\}$};

    \end{tikzpicture}
      \label{fig:exampleProcess}
    }
    \subfloat[\scriptsize Configuration schematically]{
          \centering
    \begin{tikzpicture}[yscale=1.1,xscale=.8]
        \node at (3,1) [circle,draw=gray,fill=gray,scale=.1] {};
        \node at (3.3,1) [circle,draw=gray,fill=gray,scale=.1] {};
        \node at (3.6,1) [circle,draw=gray,fill=gray,scale=.1] {};

      \draw (3,1) ellipse (3cm and 1cm);
        \node at (.3cm,1cm) {$\phi_0$};
        \node at (1,1.5) [circle,draw=black,fill=black,scale=.4] {};
        \node at (.6,1.1) [circle,draw=black,fill=black,scale=.4] {};
        \node at (1,.4) [circle,draw=black,fill=black,scale=.4] {};

      \draw (3.6,1) ellipse (2cm and .6cm);
        \node at (1.9,1) {$\phi_2$};

      \node at (1.5,.8) [circle,draw=black,fill=black,scale=.4] {};
      \node at (1.6,1.3) [circle,draw=black,fill=black,scale=.4] {};

      \draw (3.3,1) ellipse (2.5cm and .8cm);
        \node at (1.1cm,1cm) {$\phi_1$};
        \node at (2.4,1.2) [circle,draw=black,fill=black,scale=.4] {};
        \node at (2.5,0.8) [circle,draw=black,fill=black,scale=.4] {};

      \draw[very thick] (4.3,1) ellipse (.5cm and .4cm);
        \node at (4.1,1) [] {$\phi_k$};
        \node at (4.5,1) [circle,draw=black,fill=black,scale=.4] {};
    \end{tikzpicture}
      \label{fig:ConfigurationProcess}
    }
\end{figure}
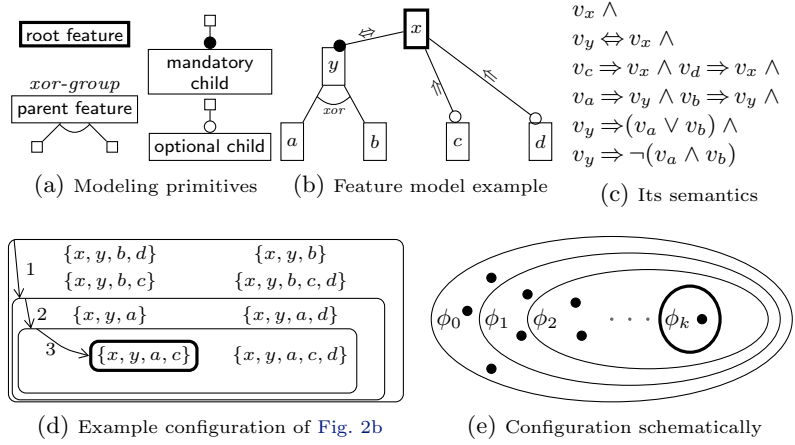
%
  The semantics of a feature model is typically defined with
  propositional logic. Each feature $f$ is represented by a
  propositional variable $v_f$ and a Boolean formula is constructed as
  a conjunction of \flee representing the different modeling
  primitives. The satisfying assignments of the resulting formula
  define the set of possible feature
  combinations~\cite{SchobbensEtAl06}.
  A popular notation is the \emph{FODA notation}~\cite{KangEtAl90} with the
  primitives in \autoref{fig:fmLegend} exemplified by
  \autoref{fig:exampleABCDfeatureModel} whose semantics is
  in~\autoref{fig:exampleABCDfeatureModelSemantics}. The corresponding
  feature combinations, defining the problem space, are listed in
  \autoref{fig:exampleProcess}.
  The FODA notation has several extensions, \eg \emph{feature
    attributes} represent values such as price. For general
  attributes, Boolean logic is insufficient and the semantics is
  expressed as a Constraint Satisfaction
  Problem~\cite{BenavidesEtAl05}.


\subsection{Configuration Process}
  \label{sec:ConfigurationProcess}

  \todo{how do we call the element of the problem space, \ie the
    solution of the constraint in this informal part?}

  In the interactive configuration process, the user specifies his
  requirements step-by-step, gradually shrinking the problem space
  (see \autoref{fig:ConfigurationProcess}).  Hence, this process can
  be seen as a step-wise refinement of the constraint defining the
  space~\cite{CzarneckiEtAl04,JKW2008Model}. How the problem space is
  refined in each step is called a \emph{decision}.

  A \emph{configurator} is a tool that displays the model capturing
  the constraints, presents the user with possible (manual) decisions,
  and infers necessary (automatic) decisions.
  The tool discourages the user from making decisions inconsistent
  with the constraints, and, it suggests decisions that are necessary
  to satisfy the constraints.
  For convenience, configurators enable the user to \emph{retract}
  previously made decisions; some even enable to temporarily
  violate the constraints but this is out of the scope of this article
  and gradual refinement will be assumed.


  For illustration consider the feature model in
  \autoref{fig:exampleABCDfeatureModel} and the following steps (see
  \autoref{fig:exampleProcess}). (1)~The user selects the feature $a$.
  This implicitly deselects $b$ as $a$ and
  $b$ are in an xor-group and there are no feature configurations with both $a$ and $b$.
  (2)~The user selects the feature $c$,
  which has no further effects. (3)~Finally, he deselects the feature
  $d$ and the process is completed since exactly one feature combination is left, \ie each feature
  is either in the product or not.

  In summary, the input to the configurator is a constraint defining a
  large set of possibilities---the outermost ellipse in
  \autoref{fig:ConfigurationProcess}. Gradually, this set is shrunk
  until exactly one possibility is left (assuming the user succeeded).

  \subsection{Completing a Configuration Process and the Shopping Principle}
  \label{sec:ShoppingPrinciple}

  The concepts introduced in the previous sections are
  well known~\cite{HadzicEtAl04}. However, the configuration literature
  does not study the completion of a configuration process.
  As stated above, at the end of the process the decisions must
  determine exactly one feature combination (the innermost ellipse in
  \autoref{fig:ConfigurationProcess}).  So, how does the user achieve
  this?  This article introduces the following classification.

\def\scFull{\textbf{M}\xspace}
\def\scRandom{\textbf{A}\xspace}
\def\scPref{\textbf{A$^+$}\xspace}
\vspace{-.3cm}
  \paragraph{\scFull (Manual).}
  The user makes decisions up to the point when all considered
  variables have been bound, \ie each variable has been assigned a
  value by the user or by a decision inferred by the configurator.
  The disadvantage of this approach is that the user needs to fill in
  every single detail, which is cumbersome especially if there are
  some parts of the problem that are not of a high relevance to the
  user. The only assistance the tool provides is the mechanism that
  infers new decisions or disables some decision. We will not discuss
  this case further.

\vspace{-.3cm}
  \notegb{double check use of vspace here}
  \paragraph{\scRandom (Full blind automation).}
  A function automatically computes \emph{some} values for all
  variables that do not have a value yet. The disadvantage of this
  option is that it takes all the control from the user as it is
  essentially making decisions for him.
\vspace{-.3cm}
  \notegb{double check use of vspace here}
  \paragraph{\scPref (Smart automation).}
  If we assume some form of an a priori, common sense assumptions, there is an approach
  somewhere between the options~\scFull and~\scRandom.
  \notegb{It might not become clear, what you mean by ``knowledge''. How does this affect the automation?}
  In the example above, the user had to \emph{explicitly} deselect the
  optional feature~$d$ but would it be possible to instead say that
  all features not selected should be deselected?
  \todo{make sure that it's in sync with the example}
      The motivation for this approach can be explained by an analogy
      with shopping for groceries (thus the \emph{shopping
        principle}). The customer asks only for those items that he
      wants rather than saying for each item in the store whether he
      wants it or not.\notem{Here we could elaborate on the example by
        extending it with a shopping list.}
      If some variables cannot be bound according to this principle,
      due to some dependencies, the tool will highlight them since it
      is possible that the user forgot to make a certain decision, \eg
      we wouldn't want the tool to decide between features in an
      xor-group ($a$ and $b$).
%

      In some sense, the scenario \scPref is a more careful version of
      scenario \scRandom. Both support the user to complete the configuration process.
      Scenario \scRandom binds \textbf{all} variables, whereas \scPref
      only those for which this doesn't mean deciding something for
      the user.
      The following section investigates these scenarios in
      constraints defined as Boolean \flee (recall that FODA
      produces Boolean \flee).


\section{Propositional Configuration}
  \label{sec:PropositionalConfiguration}

  First, let us recall some basic terms from propositional logic.  Let
  $\V$ be some finite set of variables. The propositional \flee
  discussed from now on will be only on these variables.
%
  A \emph{variable assignment}  assigns either
  \textit{true} or \textit{false} to each considered variable.  Often
  it is useful to think of a variable assignment as the set of
  variables that are assigned the value~\textit{true}, \eg the
  variable assignment $x\mapsto\textit{true}, y\mapsto\textit{false}$
  corresponds to the set $\{x\}$ in the set-based notation.

  A~\emph{model} of a formula $\phi$ is such a variable assignment
  under which $\phi$ evaluates to \textit{true}.  We say that the
  formula $\phi$ is \emph{satisfiable}, denoted as $\SAT(\phi)$, if
  and only if $\phi$ has at least one model, \eg the formula $x\lor y$
  is satisfiable whereas the formula $x\land\lnot x$ is not.  We
  write $\phi\models\psi$ to denote that the formula $\psi$ evaluates
  to \emph{true} under all models of $\phi$, \eg it holds that $x\land
  y\models x$.

\subsection{Propositional Configuration Process}
  \label{PropositionalConfigurationProcess}
  \notegb{This intro sounds a bit strange. Can you express that a bit
    better?}  In order to reason about the feature model and the
  user's requirements, the configurator translates them in some form of
  mathematical representation. In this section we assume that the
  model has already been translated into propositional logic (see
  \autoref{fig:exampleABCDfeatureModelSemantics} for illustration).


  \begin{definition}
    A \emph{propositional configuration process} for some finite set of
    variables $\V$ and a satisfiable propositional formula $\phi$ only
    on the variables from $\V$ is a sequence of propositional \flee
    $\phi_0,\dots,\phi_k$ such that $\phi_0\defeq\phi$,
    $\phi_{i+1}\defeq\phi_i\land \xi_i$ for all $i\in 0\dots k-1$,
    and $\phi_k$ is satisfied by one variable assignment. The \flee
    $\xi_i$ are decisions made by the user or decisions inferred by
    the configurator.
    If $\xi_i$ is of the form $v$ for some variable $v\in\V$, then we
    say that the variable $v$ has been \emph{assigned the value}
    \textit{true} in step $i$; if $\xi_i$ is of the form $\lnot v$, we
    say that it has been \emph{assigned the value} \textit{false} in
    step $i$.
    Observe that $\phi_{i+1}\impl\phi_i$ for all $i\in 0\dots k-1$,
    \ie the set of models is shrunk along the process.
  \end{definition}

  \begin{example}
    Let $\phi_0\defeq (\lnot u\lor\lnot v) \land (x\impl
    y)$. The user sets $u$ to \textit{true} ($\phi_1\defeq\phi_0\land u$);
    the configurator sets $v$ to \textit{false} as $u$ and $v$ are mutually
    exclusive ($\phi_2\defeq\phi_1\land \lnot v$). The user sets
    $y$ to \textit{false} ($\phi_3\defeq\phi_2\land \lnot y$);
    the configurator sets $x$ to \textit{false} ($\phi_4\defeq\phi_3\land\lnot x$).
    The process is finished as all variables were assigned a value.
  \end{example}

  The inference mechanism of the configurator typically inspects for
  all variables $v\in\V$ whether $\phi_l\models v$, in which case it
  sets $v$ to \textit{true}, and whether $\phi_l\models\lnot v$, in
  which case it sets $v$ to \textit{false}. If a value has been
  inferred, the user is discouraged by the user interface to change it
  (``graying out'').  This can be computed with the help of a SAT
  solver~\cite{Janota2008Do} or Binary Decision Diagrams~(BDDs)~\cite{HadzicEtAl04}.

\subsection{Completing a Propositional Configuration Process}
  \label{sec:EndingPropositionalConfigurationProcess}

  Let us look at the scenarios for completing a propositional
  configuration process. Earlier, we have identified two types of
  functions that the user may invoke at any step of the process:
  (\scRandom) a function that binds all the remaining variables;
  (\scPref) a function that finds values for only some variables
  according to an a~priori knowledge; we call this function a
  \emph{shopping principle function}.

  The case~\scRandom is straightforward, finding a solution to the
  formula $\phi_i$ in step $i$ is a satisfiability problem which can
  be solved by a call to a SAT solver or by a traversal of a BDD
  corresponding to $\phi_i$ (see~\cite{Janota2008Do,HadzicEtAl04} for
  further references).

  The scenario~\scPref, however, is more intriguing. The a priori
  knowledge that we apply is the shopping principle (see
  \autoref{sec:ShoppingPrinciple}), \ie what has not been selected
  should be \textit{false}.  According to our experience and intuition
  (as well as other researchers~\cite[Section~4.2]{Batory05}), this
  is well in accord with human reasoning: the user has the impression
  that if a variable (a feature in a feature model) has not been
  selected, then it should be deselected once the process is over.

  \notegb{In the work on the configurator we used the term \emph{eliminated}. \emph{Deselected} might be understood as removing an earlier selection. IMHO, in the context of this paper this is not dangerous and you can use deselectable.}

  However, it is not possible to set all unassigned variables to
  \textit{false} in all cases.  For instance, in $u\lor v$ we cannot
  set both $u$ and $v$ to \textit{false}---the user must choose which
  one should be \textit{true}, and we do not want to make such
  decision for him (otherwise we would be in
  scenario~\scRandom). Another way to see the problem is that setting
  $u$ to \textit{false} will \emph{force} the variable $v$ to be
  \textit{true}, and vice-versa.
  If we consider the formula $x\impl(y\lor z)$, however, all the
  variables can be set to \textit{false} at once and no further input
  from the user is necessary.
  \notegb{We might note that the set of dispensable vars changes during progress of the configuration.}

  In summary, the objective is to maximize the set of variables that can
  be set to~\textit{false} without making any decisions for the user,
  \ie variables that can be deselected safely. Upon a request, the
  configurator will set the safely-deselectable variables to~\textit{false} and
  highlight the rest as they need attention from the user.
%
\subsection{Deselecting Safely}
  \label{sec:DeselectingSafely}

  We start with an auxiliary definition that identifies sets of
  variables that can be deselected at once.  This definition enables
  us to specify those variables that can be deselected safely; we call
  such variables dispensable variables (we kindly ask the reader to
  distinguish the terms deselectable and dispensable).
  \begin{definition}[Deselectable]
    A set of variables $X\subseteq\V$ is \emph{deselectable} w.r.t.\
    the formula $\psi$, denoted as $\D(\psi, X)$, iff all
    variables in $X$ can be set to \textit{false} at once.
    \notegb{here you might add ``without violating the satisfiability of $\psi$''}
    Formally defined as
    $\D(\psi, X) \defeq \SAT\left(\psi \land \bigwedge_{v \in X} \lnot v\right)$.
    Analogously, a single variable $v\in\V$ is \emph{deselectable} w.r.t.\ the formula
    $\psi$ iff it is a member of some deselectable set of
    variables, \ie  $\D(\psi,v)\defeq\SAT(\psi\land\lnot v)$.
  \end{definition}

  \begin{definition}[Dispensable variables]\label{def:dispensable}
    A variable $v\in \V$ is \emph{dispensable} w.r.t.\ a formula
    $\psi$ iff the following holds:
    $\left(\forall X\subseteq\V\right)\left(\D(\psi, X) \impl \D(\psi \land \lnot v, X)\right)$.
  \end{definition}

  In plain English, a variable $v$ is dispensable iff any deselectable
  set of variables~$X$ remains deselectable after $v$ has been
  deselected (set to \textit{false}).  Intuitively, the deselection of
  $v$ does not force selection of anything else, which follows the
  motivation that we will not be making decisions for the user.
  In light of the shopping principle (see
  \autoref{sec:ShoppingPrinciple}), a customer can skip a grocery item
  only if skipping it does not require him to obtain some other items.
  The following examples illustrate the two definitions above.
  \begin{example}
    Let $\phi \defeq (u\lor v) \land (x\impl y)$. Each of the
    variables is deselectable but only $x$ and $y$ are dispensable.
    The set $\{x,y\}$ is deselectable, while the set $\{u,v\}$ is not.
    The variable $u$ is not dispensable as $\{v\}$ ceases to be
    deselectable when $u$ is set to \textit{false}; analogously
    for~$v$. The variable $x$ is dispensable since after $x$ has been
    deselected, $y$ can still be deselected and the variables $u$, $v$ are
    independent of $x$'s value. Analogously, if $y$ is deselected, $x$
    can be deselected.

    Observe that we treat \textit{true} and \textit{false} asymmetrically, deselecting $y$
    forces $x$ to \textit{false}, which doesn't collide with dispensability;
    deselecting $u$ forces $v$ to \textit{true} and therefore $u$ is \emph{not}
    dispensable.
  \end{example}

%
  \begin{example}
    Let $\phi$ be defined as in the previous example and the user is
    performing configuration on it. The user invokes the shopping
    principle function. As $x$ and $y$ are dispensable, both are
    deselected (set to~\textit{false}). The variables $u$ and $v$ are
    highlighted as they need attention. The user selects $u$, which
    results in the formula $\phi_1\defeq\phi\land u$. The variable
    $v$ becomes dispensable and can be deselected automatically. The
    configuration process is finished as all variables have a value.
  \end{example}


  As we have established the term dispensable variable, we continue by
  studying its properties in order to be able to compute the set of
  dispensable variables and to gain more intuition about them.



\begin{lemma}
  Let $\Upsilon_\phi$ denote the set of all dispensable variables of $\phi$.
  For a satisfiable $\psi$, the dispensable variables can be
  deselected all at once, \ie $\D(\psi, \Upsilon_\phi)$.
\end{lemma}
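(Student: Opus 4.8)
The plan is to prove a mildly stronger statement that is better suited to induction: \emph{every} subset $X\subseteq\V$ consisting of variables dispensable w.r.t.\ a satisfiable $\psi$ satisfies $\D(\psi,X)$. The lemma is then the special case $X\defeq\Upsilon_\phi$. The induction is on $|X|$, and it relies on two small facts.

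The first is a ``currying'' identity that just unfolds the definition of $\D$ as a satisfiability check: for $v\notin X$,
$\D(\psi, X\cup\{v\})=\SAT\bigl(\psi\land\lnot v\land\bigwedge_{u\in X}\lnot u\bigr)=\D(\psi\land\lnot v, X)$.
Instantiating this with $X\defeq\emptyset$ shows that a variable dispensable w.r.t.\ a satisfiable $\psi$ is deselectable, so $\psi\land\lnot v$ is again satisfiable. The second fact is the key one: if $v$ and $w$ are both dispensable w.r.t.\ $\psi$, then $w$ is still dispensable w.r.t.\ $\psi\land\lnot v$. To see this, take any $X$ with $\D(\psi\land\lnot v, X)$; by the currying identity this is $\D(\psi, X\cup\{v\})$, and since $w$ is dispensable w.r.t.\ $\psi$ (apply \autoref{def:dispensable} with the set $X\cup\{v\}$) we obtain $\D(\psi\land\lnot w, X\cup\{v\})$, which by currying again is $\D(\psi\land\lnot v\land\lnot w, X)$, as required.

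I expect this stability property to be the only genuinely delicate point; intuitively it says that deselecting one dispensable variable cannot ``spoil'' the dispensability of the others, which is exactly the non-interference one needs for a simultaneous deselection. Note that the set of dispensable variables of $\psi\land\lnot v$ may be strictly larger than that of $\psi$, so it is important that the stability claim is about the particular variables we started with rather than about the whole set $\Upsilon$.

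Given these, the induction on $|X|$ is routine. For $|X|=0$ we need $\D(\psi,\emptyset)$, i.e.\ $\SAT(\psi)$, which holds by assumption. For the step, write $X=\{v\}\cup X'$ with $v\notin X'$; since $v$ is dispensable w.r.t.\ $\psi$, the formula $\psi\land\lnot v$ is satisfiable, and by the stability property every variable of $X'$ is dispensable w.r.t.\ $\psi\land\lnot v$. The induction hypothesis applied to the satisfiable formula $\psi\land\lnot v$ and the set $X'$ yields $\D(\psi\land\lnot v, X')$, which by the currying identity is precisely $\D(\psi, X)$. Taking $X\defeq\Upsilon_\phi$ gives $\D(\psi,\Upsilon_\phi)$, proving the lemma.
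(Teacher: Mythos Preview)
Your proof is correct, but it takes a more elaborate route than the paper's. Both arguments proceed by induction on the size of the subset of dispensable variables, and both use the ``currying'' identity $\D(\psi\land\lnot v,X)=\D(\psi,X\cup\{v\})$. The difference lies in how the inductive step is organised.

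You change the formula at each step: you peel off $v$, pass to $\psi\land\lnot v$, and then need your stability lemma to guarantee that the remaining variables of $X'$ are still dispensable w.r.t.\ this new formula, so that the induction hypothesis (quantified over all satisfiable formulas) can be invoked. The paper instead keeps $\psi$ fixed throughout and applies the definition of dispensability directly: from the induction hypothesis $\D(\psi,\Upsilon_i)$ and the fact that the new variable $x$ is dispensable w.r.t.\ $\psi$, \autoref{def:dispensable} with $X:=\Upsilon_i$ immediately gives $\D(\psi\land\lnot x,\Upsilon_i)=\D(\psi,\Upsilon_{i+1})$. No auxiliary stability result is needed, because the universal quantifier over $X$ in the definition already lets one plug in the growing set $\Upsilon_i$.

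So your detour through ``dispensability is preserved under deselecting a dispensable variable'' is sound and even interesting in its own right, but for this lemma it is avoidable: the definition of dispensability is strong enough to drive the induction directly on the original formula.
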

\begin{proof}
  By induction on the cardinality of subsets of $\Upsilon_\psi$.
  Let $\Upsilon_0\defeq \emptyset$, then $\D(\psi, \Upsilon_0)$ as
  $\psi$ is satisfiable.
  Let $\Upsilon_i,\Upsilon_{i+1} \subseteq\Upsilon_\psi$ s.t.\
  \hbox{$\Upsilon_{i+1}= \Upsilon_i \union \{x\}$},
  $|\Upsilon_i|=i$, and \hbox{$|\Upsilon_{i+1}|=i+1$}.
  Since $x$ is dispensable and $\D(\phi, \Upsilon_i)$, then $\D(\phi \land
  \lnot x, \Upsilon_i)$, which is equivalent to $\D(\phi, \Upsilon_i \union
  \{x\})$.
\end{proof}

The following lemma reinforces that the definition of dispensable
variables adheres to the principles we set out for it, \ie it
maximizes the number of deselected variables while not arbitrarily
deciding between variables.

\begin{lemma}
  The set $\Upsilon_\phi$---the set of all dispensable variables of
  $\phi$---is the intersection of all maximal sets of deselectable
  variables of $\phi$.
\end{lemma}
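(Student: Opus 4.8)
The plan is to prove the equality by two inclusions, and the whole argument rests on one syntactic observation: for every formula $\psi$, every variable $v\in\V$, and every $X\subseteq\V$, the statements $\D(\psi\wedge\lnot v, X)$ and $\D(\psi, X\cup\{v\})$ are literally the same, since both unfold to $\SAT\!\left(\psi\wedge\lnot v\wedge\bigwedge_{w\in X}\lnot w\right)$. Alongside this I would first record two elementary facts used in both directions: (i) \emph{downward closure} --- if $Y\subseteq Z$ and $\D(\phi,Z)$, then $\D(\phi,Y)$, because any model of the larger conjunction is a model of the smaller one; and (ii) \emph{extendability} --- since $\V$ is finite and $\phi$ is satisfiable, the empty set is deselectable and every deselectable set is contained in some \emph{maximal} (w.r.t.\ set inclusion) deselectable set, obtained by greedily adjoining variables as long as deselectability is preserved.

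For the inclusion $\Upsilon_\phi\subseteq\bigcap\{M : M\text{ a maximal deselectable set}\}$: let $v$ be dispensable and let $M$ be any maximal deselectable set. From $\D(\phi,M)$ and the dispensability of $v$ (instantiated at $X\defeq M$) we obtain $\D(\phi\wedge\lnot v, M)$, which by the observation above equals $\D(\phi, M\cup\{v\})$; maximality of $M$ then forces $v\in M$. As $M$ was arbitrary, $v$ lies in the intersection.

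For the converse inclusion $\bigcap\{M : M\text{ maximal deselectable}\}\subseteq\Upsilon_\phi$: let $v$ lie in every maximal deselectable set, and fix an arbitrary $X$ with $\D(\phi,X)$; by \autoref{def:dispensable} it suffices to show $\D(\phi\wedge\lnot v,X)$. Using (ii), extend $X$ to a maximal deselectable set $M$; then $X\cup\{v\}\subseteq M$ because $v\in M$, so by (i) $\D(\phi, X\cup\{v\})$, i.e.\ $\D(\phi\wedge\lnot v, X)$. Since $X$ was arbitrary, $v$ is dispensable.

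I do not expect a deep obstacle here: once the identity $\D(\psi\wedge\lnot v, X)=\D(\psi, X\cup\{v\})$ and facts (i)--(ii) are in place, both directions are short. The only point needing care is the setup --- making explicit that ``maximal'' means maximal under inclusion and that maximal deselectable sets exist (finiteness of $\V$ and satisfiability of $\phi$). In the degenerate case where $\phi$ is unsatisfiable, there are no deselectable sets, the family of maximal deselectable sets is empty, its intersection is $\V$ by convention, and every variable is vacuously dispensable, so the statement still holds; I would mention this only in passing.
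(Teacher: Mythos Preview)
Your proof is correct and follows essentially the same two-inclusion argument as the paper's sketch: adding a dispensable variable to any deselectable set preserves deselectability (hence it lies in every maximal one), and conversely any variable in every maximal deselectable set is dispensable via the extend-to-maximal step. Your version is simply more explicit---spelling out the identity $\D(\psi\wedge\lnot v,X)=\D(\psi,X\cup\{v\})$, downward closure, and existence of maximal deselectable sets---where the paper's sketch leaves these implicit.
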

\begin{proof}[sketch]
  From definition of dispensability, any deselectable set remains
  deselectable after any dispensable variable is added to it, hence
  $\Upsilon_\phi$ is a subset of any maximal deselectable set.
  $\Upsilon_\phi$ is a maximal set with this property because for each
  deselectable set that contains at least one non-dispensable variable
  there is another deselectable set that does not contain this
  variable.
\end{proof}

\subsection{Dispensable Variables and Non-monotonic Reasoning}
  \label{sec:DispensableToMinimals}

  After defining the shopping principle in mathematical terms, the
  authors of this article realized that dispensable variables
  correspond to certain concepts from Artificial Intelligence as shown
  in this subsection.

  The \emph{Closed World Assumption~(CWA)} is a term from logic programming
  and knowledge representation.  Any inference that takes place builds on
  the assumption that if something has not been said to be
  \textit{true} in a knowledge base, then it should be assumed
  \textit{false}.  Such reasoning is called \emph{non-monotonic} as an
  increase in knowledge does not necessarily mean an increase in
  inferred facts.
  In terms of mathematical logic, CWA means adding negations of
  variables that should be assumed \textit{false} in the reasoning
  process. Note that not all negateable variables ($\phi\nvDash v$)
  can be negated, \eg for the formula $x\lor y$ both $x$ and $y$
  are negateable but negating both of them would be inconsistent with the formula.

  The literature offers several definitions of reasoning under Closed
  World Assumption~\cite{CadoliLenzerini90,EiterGottlob93}. A
  definition relevant to this article is the one of the
  \emph{Generalized Closed World Assumption (GCWA)} introduced by
  Minker~\cite{Minker82} (see~\cite[Def.~1]{CadoliLenzerini90}).
  \begin{definition}
    The variable $v$ is \emph{free of negation} in the formula $\phi$
    iff for any positive clause $B$ for which $\phi\nvDash
    B$, it holds that $\phi\nvDash v\lor B$.
    The \emph{closure} $C(\phi)$ of a formula $\phi$ is defined as
      $C(\phi) \defeq \phi \union \comprehen{\lnot K}{K \text{ is free
          for negation in } \phi}$.
  \end{definition}

  It is not difficult to see that dispensable variables are those that
  are free of negation as shown by the following lemma.

  \begin{lemma}
    Dispensable variables coincide with those that are free of
    negation.
  \end{lemma}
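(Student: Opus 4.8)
The plan is to prove the two inclusions separately, using throughout the elementary observation that $\D(\phi,X)$ holds exactly when $\phi$ has a model that sets every variable of $X$ to \textit{false}, i.e.\ exactly when $\phi\nvDash\bigvee_{w\in X}w$. Thus, up to this translation, a deselectable set $X$ is nothing but a positive clause $\bigvee_{w\in X}w$ that $\phi$ fails to entail, and conversely the set of variables occurring in a positive clause $B$ is deselectable w.r.t.\ $\phi$ iff $\phi\nvDash B$. This dictionary makes the two definitions almost literal paraphrases of one another, so the work is mostly bookkeeping.

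For the direction ``dispensable $\Rightarrow$ free of negation'', I would take a positive clause $B$ with $\phi\nvDash B$, let $X$ be the set of variables occurring in $B$, and note that $\D(\phi,X)$ holds. Dispensability of $v$ then yields $\D(\phi\land\lnot v,X)$, so there is a model of $\phi$ that falsifies $v$ together with every disjunct of $B$; that model falsifies $v\lor B$, giving $\phi\nvDash v\lor B$ as required. (If $v$ already occurs in $B$ the conclusion is immediate, since then $v\lor B\equiv B$.)

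For ``free of negation $\Rightarrow$ dispensable'', I would fix an arbitrary $X\subseteq\V$ with $\D(\phi,X)$ and set $B\defeq\bigvee_{w\in X}w$, understood as the empty disjunction $\bot$ when $X=\emptyset$ (which $\phi$ still does not entail, $\phi$ being satisfiable). From $\phi\nvDash B$ and freedom of negation of $v$ I obtain $\phi\nvDash v\lor B$, hence a model of $\phi$ that falsifies $v$ and all of $X$ simultaneously, that is, $\D(\phi\land\lnot v,X)$. As $X$ was arbitrary, $v$ is dispensable.

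The only point that needs care is the degenerate case $X=\emptyset$, equivalently the empty positive clause: it is exactly what forces the conclusion $\phi\nvDash v$ and is why the statement relies on $\phi$ being satisfiable; had one insisted that positive clauses be non-empty, the equivalence would fail, e.g.\ when $\phi$ entails every variable (then no variable is dispensable, yet all are vacuously free of negation). Apart from that, both halves are just unfolding of the definitions, so I anticipate no genuine obstacle. As a sanity check and for intuition, one can also observe that each of the two properties is equivalent to ``$v$ lies in no minimal model of $\phi$'' — using that over the finite set $\V$ every model of $\phi$ contains a minimal one — which reproves the lemma and connects it with the standard minimal-model reading of the GCWA.
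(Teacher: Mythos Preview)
Your proof is correct and is essentially the same argument as the paper's: both rest on the single observation that $\phi\nvDash B$ is equivalent to $\SAT(\phi\land\lnot B)$, which turns the positive clause $B=\bigvee_{w\in V_B}w$ into the deselectable set $V_B$ and makes the free-of-negation condition read literally as $\D(\phi,V_B)\impl\D(\phi\land\lnot v,V_B)$. The paper compresses this into one rewriting step without separating the two implications; you spell out each direction and additionally flag the empty-clause case, which the paper leaves implicit.
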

  \begin{proof}
    Observe that $\phi\nvDash \psi$ iff $\SAT(\phi \land \lnot\psi)$,
    then the definition above can be rewritten as: For $B'\defeq
    \bigwedge_{v\in V_B} \lnot v$ for some set of variables $V_B$ for
    which $\SAT(\phi\land B')$, it holds that $\SAT(\phi\land\lnot
    v\land B')$. According to the definition of $\D$, this is
    equivalent to $\D(\phi, V_B) \impl \D(\phi\land\lnot v, V_B)$
    (compare to \autoref{def:dispensable}).
  \end{proof}

  \emph{Circumscription}, in our case the \emph{propositional
    circumscription}, is another important form of
  reasoning~\cite{McCarthy80}.
%
  A circumscription of a propositional formula $\phi$ is a set of
  minimal models of $\phi$. Where a model $\alpha$ of a formula $\phi$
  is \emph{minimal} iff $\phi$ has no model $\alpha'$ which
  would be a strict subset of $\alpha$,
  e.g., the formula $x\lor y$ has the models $\{x\},\{y\},\{x,y\}$
  where only $\{x\}$ and $\{y\}$ are minimal.  We write $\phi
  \models_{\minm} \psi$ to denote that $\psi$ holds in all minimal
  models of $\phi$, e.g., $x\lor y\models_{\minm}\lnot(x\land y)$.

  The following lemma relates minimal models to dispensable variables
  (The proof of equivalence between minimal models and GCWA is found
  in~\cite{Minker82}).

\begin{lemma}\label{lem:DispensableToMinimal}
  A variable $v$ is dispensable iff it is \textit{false}
  in all minimal models.
\end{lemma}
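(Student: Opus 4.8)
The plan is to prove both implications directly, working with the set-based reading of assignments and unfolding the definition of $\D$; the one ingredient beyond pure bookkeeping is that, since $\V$ is finite, every model of $\phi$ contains a minimal model of $\phi$ as a subset.

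First I would establish that a dispensable $v$ is \textit{false} in every minimal model. Let $\alpha$ be a minimal model of $\phi$ and put $X \defeq \V \setminus \alpha$, the set of variables that $\alpha$ sets to \textit{false}; then $\alpha$ itself witnesses $\D(\phi, X)$. Dispensability of $v$ gives $\D(\phi \land \lnot v, X)$, so there is a model $\beta$ of $\phi$ with $v$ and all of $X$ set to \textit{false}, i.e.\ $\beta \subseteq \V \setminus (X \cup \{v\}) = \alpha \setminus \{v\}$. If $v \in \alpha$ then $v \notin \beta$ while $v \in \alpha$, so $\beta \subsetneq \alpha$, contradicting minimality of $\alpha$; hence $v \notin \alpha$.

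For the converse, assume $v$ is \textit{false} in every minimal model of $\phi$ and take any $X \subseteq \V$ with $\D(\phi, X)$, witnessed by a model $\gamma$ of $\phi$ with $\gamma \cap X = \emptyset$. By finiteness of $\V$, choose a model $\alpha$ of $\phi$ with $\alpha \subseteq \gamma$ of least cardinality among such models; then $\alpha$ is a minimal model of $\phi$, since any model strictly below $\alpha$ would also lie below $\gamma$ and contradict the choice. Because $\alpha$ is minimal we get $v \notin \alpha$, and because $\alpha \subseteq \gamma$ we still have $\alpha \cap X = \emptyset$; thus $\alpha$ witnesses $\D(\phi \land \lnot v, X)$. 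As $X$ was arbitrary, $v$ is dispensable.

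I expect the only genuinely load-bearing observation to be that every model contains a minimal one as a subset (which is exactly where finiteness of $\V$ enters); the remainder is a routine translation between $\SAT\!\left(\phi \land \bigwedge_{w \in X} \lnot w\right)$ and the existence of a model of $\phi$ disjoint from $X$. As an alternative route, the lemma also follows by chaining the previous lemma (dispensable variables are exactly those free of negation) with Minker's equivalence between the GCWA and truth in all minimal models, but the direct argument above is shorter and self-contained.
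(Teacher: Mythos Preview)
Your direct argument is correct; both directions are clean, and you rightly isolate finiteness of $\V$ as the place where the existence of a minimal model below any given model is used.

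The paper, however, does not give a self-contained proof of this lemma at all: it simply states the result and points to Minker~\cite{Minker82} for the equivalence between GCWA (variables free of negation) and falsity in all minimal models, relying on the immediately preceding lemma (dispensable $\iff$ free of negation) to complete the chain. That is exactly the ``alternative route'' you mention at the end of your proposal. So your write-up is strictly more informative than the paper's: the paper outsources the substantive step, while your direct proof keeps everything inside the present definitions and avoids the detour through positive clauses and GCWA. The trade-off is that the paper's route makes the connection to the existing AI literature explicit, which is one of the article's stated goals; your direct proof is shorter and self-contained but hides that link.
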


\begin{example}
  Let $\phi_0\defeq (u\lor v) \land (x\impl y)$. The minimal models of
  the formula $\phi_0$ are $\{u\}$, $\{v\}$, hence
  $\phi_0\models_{\minm}\lnot x$ and $\phi_0\models_{\minm}\lnot
  y$. Then, if the user invokes the shopping principle function, $x$
  and $y$ are deselected, \ie $\phi_1\defeq \phi_0\land \lnot x\land
  \lnot y$. And, the user is asked to resolve the competition between
  $u\lor v$, he selects $u$, resulting in the formula
  $\phi_2\defeq\phi_1\land u$ with the models $\{u\}$ and $\{u,v\}$
  where only the model $\{u\}$ is minimal hence $v$ is set to
  \textit{false} as dispensable. The configuration process is complete
  because $u$ has the value \textit{true} and the rest are
  dispensable.
\end{example}

  \begin{table}[t]
    \centering
  \caption{Experimental Results}\label{tab:ExperimentalResults}
  \begin{tabular}{|l|c|c|c|c|c|c|}
  \hline
  \textbf{Name} &
  \textbf{Features} &
  \textbf{Clauses} &
  \textbf{Length} &
  \textbf{Done} &
  \textbf{Minimal models} \\\hline
  tightvnc &$21$ &$22$ &$5.5$ &$5.5$ &$1.0\pm0.0$ \\\hline
  apl &$27$ &$41$ &$12.2$ &$11.9$ &$1.0\pm0.0$ \\\hline
  gg4 &$58$ &$139$ &$10.0$ &$3.8$ &$15.3\pm22.6$ \\\hline
  berkeley &$94$ &$183$ &$26.6$ &$17.9$ &$1.7\pm1.1$ \\\hline
  violet & $170$ & $341$ & $56.1$ & $47.1$ & $1.6\pm0.9$ \\\hline
  \end{tabular}
  \end{table}

\subsection{Experimental Results}
  \label{sec:experimentalResults}

  The previous section shows that dispensable variables can be found
  by enumerating minimal models. Since the circumscription problem is
  $\Pi^P_2$-complete~\cite{EiterGottlob93} it is important to check if
  the computation is feasible in practice.  We applied a simple
  evaluation procedure to five feature 
  models\footnote{from
  \url{http://fm.gsdlab.org/index.php?title=Model:SampleFeatureModels}}:
  For each feature model we simulated $1000$ random manual configuration 
  processes (scenario \scFull). At each step we enumerated minimal models.
  (Algorithmic details can be found online~\cite{algos_tr}.)
  We also counted how many times 
  there was exactly one minimal model: At those steps the
  configuration process would have been completed if the user
  invoked the shopping principle function.
  
  The results appear in \autoref{tab:ExperimentalResults}. The
  column \textbf{Length} represents the number of user decisions
  required if the shopping principle function is not invoked;
  the column \textbf{Done} represents in how many steps an
  invocation of the shopping principle function completes the
  configuration; the column \textbf{Minimal models} shows that
  the exponential worst case tends not to occur in practice and
  therefore enumeration of all minimal models is feasible. 

\section{Beyond Boolean Constraints}
  \label{sec:CSPs}

  The previous section investigated how to help a user with
  configuring propositional constraints. Motivated by the shopping
  principle, we were trying to set as many variables to \textit{false}
  as possible. This can be alternatively seen as that the user
  \textbf{prefers} the undecided variables to be \textit{false}.

  This perspective helps us to generalize our approach to the case of
  non-propositional constraints under the assumption that there is
  some notion of preference between the solutions. First, let us
  establish the principles for preference that are assumed for this
  section. (1)~It is a partial order on the set in question.  (2)~It is
  static in the sense that all users of the system agree on it, \eg it
  is better to be healthy and rich than sick and poor.  (3)~If two
  elements are incomparable according to the ordering, the automated
  support shall not decide between them, instead the user shall be
  prompted to resolve it.
  \notegb{here we could explain \textbf{how} the proplogic case is a special case of this.}

  To be able to discuss these concepts precisely, we define them in
  mathematical terms.  We start by a general definition of the problem
  to be configured, \ie the initial input to the configurator,
  corresponding to the set of possibilities that the user can
  potentially reach---the outermost ellipse in
  \autoref{fig:ConfigurationProcess}.
  \begin{definition}[Solution Domain]\label{def:SD}
    A \emph{Solution Domain (SD)} is a triple
    $\triple{\V}{\D}{\phi}$ where $\V$ is a set of variables
    $\V=\{v_1,\dots,v_n\}$, $\D$ is a set of respective domains
    $\D=\{D_1,\dots,D_n\}$, and the constraint
    $\phi\subseteq{D_1\times\dots\times D_n}$ is an n-ary relation on
    the domains (typically defined in terms of variables from~$\V$).

    A \emph{variable assignment} is an n-tuple $\tuple{c_1,\dots,c_n}$
    from the Cartesian product $D_1 \times\dots\times D_n$, where the
    constant $c_i$ determines the value of the variable $v_i$ for
    $i\in 1\dots n$.
    For a constraint $\psi$, a variable assignment $\alpha$ is a
    \emph{solution} iff it satisfies the constraint, \ie
    $\alpha\in\psi$.

    An \emph{Ordered Solution Domain~(OSD)} is a quadruple
    \hbox{$\quadruple{\V}{\D}{\phi}{\prec}$} where
    $\triple{\V}{\D}{\phi}$ is an SD and $\prec$ is a partial order on
    \hbox{$D_1\times\dots\times D_n$}.
    For a constraint $\psi$, a solution $\alpha$ is \emph{optimal} iff
    there is no solution $\alpha'$ of $\psi$ s.t.\ $\alpha'\neq \alpha$
    and $\alpha'\prec \alpha$.
  \end{definition}







  Recall that the user starts with a large set of potential solutions,
  gradually discards the undesired ones until only one solution is
  left. From a formal perspective, solution-discarding is carried out
  by strengthening the considered constraint, most typically by
  assigning a fixed value to some variable.

  \begin{definition}[Configuration Process]\label{def:ConfigurationProcess}
    Given a Solution Domain~$\triple{\V}{\D}{\phi}$, an
    \emph{interactive configuration process} is a sequence of
    constraints $\phi_0,\dots,\phi_k$ such that $\phi_0\defeq
    \phi$ and $\card{\phi_k}=1$.
    The constraint $\phi_{j+1}$ is defined as
    $\phi_{j+1}\defeq\phi_j \intersect \xi_j$ where the constraint
    $\xi_j$ represents the decision in step $j$ for $j\in 0\dots k-1$.
%
%
    If $\xi_j$ is of the form $v_i=c$ for a variable $v_i$ and a
    constant $c\in D_i$, we say that the variable
    $v_i$ has been assigned the value $c$ in step $j$.
    Observe that $\phi_{j+1}\subseteq\phi_j$ for $j\in 0\dots k-1$
    and $\phi_j\subseteq\phi$ for $j\in 0\dots k$.
  \end{definition}

  A configurator in this process disables certain values or assigns
  them automatically. In particular, the configurator disallows
  selecting those values that are not part of any solution of the
  current constraint, \ie in step $l$ it disables all values $c\in
  D_i$ of the variable $v_i$ for which there is no solution of the
  constraint $\phi_l$ of the form $\tuple{c_1,\dots, c,\dots\,
    c_n}$. If all values but one are disabled for the domain $D_i$,
  then the configurator automatically assigns this value to the
  variable $v_i$.

  \todo{add some more flow and punchline to this subsection}

  Now as we have established the concept for general configuration,
  let us assume that a user is configuring an Ordered Solution Domain
  (\autoref{def:SD}) and we wish to help him with configuring
  variables that have lesser importance for him, similarly as we did
  with the shopping principle. The configuration proceeds as normal
  except that after the user configured those values he wanted, he
  invokes a function that tries to automatically configure the unbound
  variables using the given preference.

  The assumption we make here is that the variables that were not
  given a value yet should be configured such that the result
  is optimal while preserving the constraints given by the user so
  far. Since the preference relation is a partial order, there may be
  multiple optimal solutions. As we do not want to make a choice for
  the user, we let him focus only on optimal solutions.

  If non-optimal solutions shall be ruled out, the configurator
  identifies such values that never appear in any optimal solution to
  reduce the number of decisions that the user must focus on. Dually,
  the configurator identifies values that appear in all optimal
  solutions, the following definitions establish these concepts.
\vspace{-.5cm}
  \begin{definition}[Settled variables.]\label{def:settled}
    For a constraint $\psi$ and a variable $v_i$, the value $c\in D_i$
    is \emph{non-optimal} iff the variable $v_i$ has the
    value $c$ only in \emph{non-optimal} solutions of $\psi$ (or,
    $v_i$ has a different value from $c$ in all optimal solutions of
    $\psi$).
    \notegb{consistency with textual explanation, ``ruled out''}
%
%
    A value $c$ is \emph{settled} iff $v_i$ has the value~$c$ in all
    optimal solutions of $\psi$.  A variable $v_i$ is settled if there
    is a settled value of $v_i$.

\notegb{consistency with textual
      explanation, ``appears in all''}
  \end{definition}


  \begin{observation}
    For some constraint and the variable $v_i$, a value $c\in D_i$ is
    settled iff all values $c'\in D_i$ different from $c$
    are non-optimal.
  \end{observation}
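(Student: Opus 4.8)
The plan is to prove the two implications separately, in each case just unfolding \autoref{def:settled}; the one device I would exploit is the reformulation stated there in parentheses, namely that a value $c'\in D_i$ is \emph{non-optimal} exactly when $v_i$ takes a value different from $c'$ in every optimal solution of the constraint.

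First I would handle the direction from left to right. Assume the value $c$ is settled, i.e.\ $v_i$ has value $c$ in every optimal solution of the constraint, and pick any $c'\in D_i$ with $c'\neq c$. If some optimal solution assigned $c'$ to $v_i$, then by settledness that very solution would also assign $c$ to $v_i$, forcing $c'=c$ --- a contradiction. Hence no optimal solution assigns $c'$ to $v_i$, which is exactly the condition that $c'$ is non-optimal; since $c'$ was arbitrary, all values of $D_i$ other than $c$ are non-optimal.

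For the converse, assume every value $c'\in D_i$ with $c'\neq c$ is non-optimal, and let $\alpha$ be an arbitrary optimal solution of the constraint. Write $c''$ for the value that $\alpha$ assigns to $v_i$, so $c''\in D_i$. If $c''\neq c$ then $c''$ is non-optimal, i.e.\ no optimal solution assigns $c''$ to $v_i$ --- contradicting the fact that $\alpha$ is optimal and does assign $c''$ to $v_i$. Therefore $c''=c$, and as $\alpha$ ranged over all optimal solutions this shows $v_i$ has value $c$ in every optimal solution, i.e.\ $c$ is settled.

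I do not expect a genuine obstacle: the statement falls out of the definitions once the parenthetical reformulation of ``non-optimal'' is in hand. The only place that warrants a brief remark is the degenerate case in which the constraint has no optimal solution at all (e.g.\ when it is unsatisfiable): then every value is vacuously settled and every value is vacuously non-optimal, so both sides of the equivalence hold trivially and the claim survives. I would note this in one sentence rather than dwell on it.
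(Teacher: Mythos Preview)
Your argument is correct: both implications follow directly from unfolding \autoref{def:settled}, and your handling of the degenerate case with no optimal solutions is appropriate. The paper itself states this as an observation without proof, so there is no alternative approach to compare against; your write-up simply makes explicit what the paper leaves to the reader.
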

  \begin{example}
    Let $x,y,z\in \{0,1\}$. Consider a constraint requiring that at
    least one of $x$,$y$,$z$ is set to $1$ (is selected). The
    preference relation expresses that we prefer lighter and
    cheaper solutions where $x$, $y$, and $z$ contribute to the total
    weight by $1$, $2$, $3$ and to the total price by $10$, $5$, and
    $20$, respectively. Hence, the solutions satisfy $(x+y+z>0)$, and
    $\triple{x_1}{y_1}{z_1} \prec \triple{x_2}{y_2}{z_2}$ iff
    $(10x_1+5y_1+20z_1 \leq
     10x_2+5y_2+20z_2) \land
     (1x_1+2y_1+3z_1 \leq
     1x_2+2y_2+3z_2) $.
     Any solution setting $z$ to $1$ is non-optimal as $z$ is more
     expensive and heavier than both $x$ and $y$, and hence the
     configurator sets $z$ to $0$ (it is settled). Choosing between
     $x$ and $y$, however, needs to be left up to the user because $x$
     is lighter than $y$ but more expensive than $y$.
  \end{example}

  Propositional configuration, studied in the previous section, is a
  special case of a Solution Domain configuration with the variable
  domains $\{ \textit{true}, \textit{false} \}$. The following
  observation relates settled and dispensable variables (definitions
  \ref{def:settled}, \ref{def:dispensable}).
\vspace{-.5cm}
  \begin{observation}
    For a Boolean formula understood as an OSD with the preference
    relation as the subset relation, a variable is settled iff it is
    dispensable or it is \textit{true} in all models (solutions).
    Additionally, if each variable is settled, the invocation of the
    shopping principle function completes the configuration process.
  \end{observation}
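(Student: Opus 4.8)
The plan is to push everything through minimal models. When $\prec$ is the subset relation, the optimal solutions of a Boolean formula $\phi$ (in the sense of \autoref{def:SD}) are exactly its minimal models, so by \autoref{def:settled} a variable $v$ is settled iff it has one and the same Boolean value in every minimal model of $\phi$---either \textit{false} in all of them or \textit{true} in all of them.

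For the first claim I would treat the two directions separately. If $v$ is \textit{false} in every minimal model, then $v$ is dispensable by \autoref{lem:DispensableToMinimal}, and \textit{false} is a settled value; if $v$ is \textit{true} in every model it is in particular \textit{true} in every minimal model, hence settled. Conversely, assume $v$ is settled, with constant value $c$ over the minimal models. If $c$ is \textit{false}, then $v$ is \textit{false} in all minimal models, which is dispensability by \autoref{lem:DispensableToMinimal}. If $c$ is \textit{true}, I would upgrade ``\textit{true} in every minimal model'' to ``\textit{true} in every model'': for an arbitrary model $\alpha$ there is a minimal model $\alpha'\subseteq\alpha$, so $v\notin\alpha$ would give $v\notin\alpha'$, contradicting that $v$ is \textit{true} in $\alpha'$. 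I would also remark that, since $\phi$ is satisfiable, a minimal model exists, so a variable cannot be simultaneously dispensable and \textit{true} in all models; the disjunction in the statement is thus a genuine case split on the settled variables.

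For the additional claim, suppose every variable of $\phi$ is settled. Then every variable is constant across the minimal models, so all minimal models coincide; call the unique one $\mu$. By the first part, $\mu$ consists precisely of the variables that are \textit{true} in every model, and $\V\setminus\mu=\Upsilon_\phi$. Invoking the shopping principle function appends the decisions $\lnot v$ for $v\in\Upsilon_\phi$ (in one step, or one literal at a time---admissible because the dispensable variables can be deselected all at once), producing $\psi\defeq\phi\land\bigwedge_{v\in\Upsilon_\phi}\lnot v$. Every model of $\psi$ is a model of $\phi$ contained in $\mu$, and by minimality of $\mu$ no such model can be a proper subset of $\mu$; hence $\mu$ is the unique model of $\psi$, i.e.\ the configuration process has been completed.

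The routine ingredients---that optimality coincides with minimality under the subset order, and the bookkeeping of appending the decisions---are immediate from the definitions. The one step that carries real content is the passage from ``\textit{true} in all minimal models'' to ``\textit{true} in all models'': it uses that the models of a satisfiable formula, ordered by inclusion over a finite variable set, always contain a minimal element below any given model. This is exactly where the asymmetry between \textit{true} and \textit{false} flagged in the earlier examples is put to work, and it is the part I would be most careful to state cleanly.
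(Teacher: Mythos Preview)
Your argument is correct. The paper states this result as an \emph{observation} and provides no proof at all, so there is nothing to compare against; your write-up supplies exactly the justification the paper omits, via \autoref{lem:DispensableToMinimal} and the elementary fact that every model of a satisfiable formula over finitely many variables dominates some minimal model.
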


  This final observation is an answer to the question in the title,
  \ie configuration may be completed when all variables are
  settled. And, according to our experiments this happens frequently
  in practice (column \textbf{Done} in
  \autoref{tab:ExperimentalResults}).

\section{Related Work}
  \label{sec:RelatedWork}

  Interactive configuration as understood in this article has been
  studied \eg by Had\v{z}i\'{c}~et~al.~\cite{HadzicEtAl04},
  Batory~\cite{Batory05}, and Janota~\cite{Janota2008Do}. In an
  analogous approach Janota~et~al.~\cite{JKW2008Model} discuss the use
  of interactive configuration for feature model construction.
  \notegb{Cite VAMOS2009 paper?}  The work of
  van~der~Meer~et~al.~\cite{derMeerEtAl06} is along the same lines but
  for unbounded spaces. Lottaz~et~al.~\cite{LottazStalkerSmith98}
  focus on configuration of non-discrete domains in civil engineering.

  There is a large body of research on \emph{product configuration}
  (see \cite{SabinWigel98} for an overview), which typically is
  conceptualized rather as a programming paradigm than a
  human-interaction problem. Moreover, the notion rules are used
  instead of \flee. Similarly as do we, Junker~\cite{Junker01} applies
  preference in this context.
  We should note that preference in logic has been studied
  extensively, see~\cite{DelgrandeEtAl04}.

  The problem how to help the user to finish the configuration process
  was studied by  Krebs~et~al.~\cite{KrebsEtAtl03} who applied machine
  learning to identify a certain plan in the decisions of the user.

  Circumscription has been studied extensively since the
  80's~\cite{McCarthy80,Minker82,EiterGottlob93}.
%
%
  Calculation of propositional circumscription was studied by Reiter
  and Kleer~\cite{ReiterdeKleer87}; calculation of all minimal models
  by Kavvadias~et~al. and work referenced
  therein~\cite{KavvadiasEtAl00}.


  \section{Summary}
   \label{sec:Summary}

   This article proposes a novel extension for configurators---the shopping
   principle function
   (\autoref{sec:EndingPropositionalConfigurationProcess}). This
   function automates part of the decision-making but is not trying to
   be \emph{too} smart: it does not make decisions between equally
   plausible options.
   The article mainly focuses on the propositional case, as software
   engineering models' semantics are typically propositional. The
   relation with GCWA, known from Artificial Intelligence, offers ways
   how to compute the shopping principle function
   (\autoref{sec:DispensableToMinimals}). Several experiments were
   carried out suggesting that the use of the shopping principle
   function is feasible and useful
   (\autoref{sec:experimentalResults}).  The general,
   non-propositional, case is studied at a conceptual level opening
   doors to further research (\autoref{sec:CSPs}). The authors are
   planning to integrate this function into a configurator and carry
   out further experiments as future work.

\paragraph{\textbf{Acknowledgment}}
This work is partially supported by Science Foundation Ireland under
grant no. {03/CE2/I303\_1} and the IST-2005-015905 MOBIUS project.
The authors thank Don Batory and Fintan Farmichael for valuable
feedback.

\bibliographystyle{abbrv}
\bibliography{refs,proceedings}
\end{document}